\newcommand\dint[3]{\Delta_{#1}^{#2,#3}}
\let\max=\relax
\newcommand\dsrdt{\hbox{$\delta$-CRDT}\xspace}
\begin{document}

\title{Efficient State-based CRDTs by Delta-Mutation}
\author{Paulo S\'{e}rgio Almeida \and Ali Shoker\and Carlos Baquero}
\institute{HASLab/INESC TEC and Universidade do Minho, Portugal}

\maketitle

\begin{abstract}
CRDTs are distributed data types that make eventual consistency of a
distributed object possible and non ad-hoc. Specifically, state-based CRDTs
ensure convergence  
through disseminating the entire state, that may be large, and merging it to other
replicas; whereas operation-based CRDTs disseminate operations (i.e., small states)
assuming an exactly-once reliable dissemination layer.
We introduce \emph{Delta State Conflict-Free
Replicated Datatypes} (\dsrdt) 
that can achieve the best of both worlds: 
small messages with an incremental nature,
as in operation-based CRDTs, 
disseminated over unreliable communication channels, 
as in traditional state-based CRDTs.  
This is achieved by defining \emph{$\delta$-mutators} to return a 
\emph{delta-state}, typically with a 
much smaller size than the full state, that is joined to both: local and remote states.
We introduce the \dsrdt framework, and we explain it through establishing 
a correspondence to current state-based CRDTs. 
In addition, we present an anti-entropy algorithm 
that ensures causal consistency,  
and we introduce two \dsrdt specifications of well-known replicated datatypes.

\end{abstract}

\section{Introduction}
\label{sec:intro}

Eventual consistency (EC) is a relaxed consistency model that is often adopted
by large-scale distributed
systems~\cite{riak+crdt,syn:optim:rep:1433,app:rep:optim:1606}
where availability must be maintained, despite outages and partitioning,
whereas delayed consistency is acceptable. 
A typical approach in EC systems is to allow replicas of a
distributed object to temporarily diverge, provided that they can eventually
be reconciled into a common state. To avoid application-specific
reconciliation methods, costly and error-prone, \emph{Conflict-Free Replicated
Data Types}~(CRDTs)~\cite{rep:syn:sh138,syn:rep:sh143} were introduced,
allowing the design of self-contained distributed data types that are always
available and eventually converge when all operations are
reflected at all replicas. Though CRDTs are being deployed in
practice~\cite{riak+crdt}, more work is still required to improve their
design and performance.

CRDTs support two complementary designs: \emph{operation-based} (or op-based) and \emph{state-based}.
In op-based designs~\cite{alg:rep:sh132,syn:rep:sh143}, the execution of an operation is
done in two phases: \emph{prepare} and \emph{effect}. The former is
performed only on the local replica and looks at the operation and current
state to produce a message that aims to represent the operation, which is then
shipped to all replicas.  Once received, the representation of the
operation is applied remotely using \emph{effect}. 
On the other hand, in a state-based
design~\cite{app:1639,syn:rep:sh143} an operation is only executed on the
local replica state. A replica periodically propagates its local changes to
other replicas through shipping its entire state. A received state is
incorporated with the local state via a \emph{merge} function that
deterministically reconciles both states. To maintain convergence,
\emph{merge} is defined as a \emph{join}: a least upper bound over a
join-semilattice~\cite{app:1639,syn:rep:sh143}.

Op-based CRDTs have more advantages as they can allow for simpler 
implementations, concise replica state, and smaller messages; however, 
they are subject to some limitations: First, they assume a message dissemination
layer that guarantees reliable exactly-once causal broadcast
(required to ensure idempotence); these guarantees are hard to
maintain since large logs must be retained to prevent duplication even if TCP 
is used~\cite{idempotenceHelland:2012}.
Second, 
membership management is a hard task in op-based systems especially once the number 
of nodes gets larger or due to churn problems, since 
all nodes must be coordinated by the middleware. Third, the op-based approach requires 
operations to be executed individually (even when batched) on all nodes.

The alternative is to use state-based systems which are deprived from these 
limitations. 
However, a major drawback in current state-based CRDTs is the communication overhead of
shipping the entire state, which can get very large in size. For instance, the
state size of a \emph{counter} CRDT (a vector of integer counters, one per
replica) increases with the number of replicas; whereas in a \emph{grow-only
Set}, the state size depends on the set size, that grows as more operations are
invoked.
This communication overhead limits the use of state-based CRDTs to data-types
with small state size (e.g., counters are reasonable while sets are not).
Recently, there has been a demand for CRDTs with large state sizes (e.g., in
RIAK DT Maps~\cite{Brown:2014:RDM:2596631.2596633} that can compose multiple CRDTs).

In this paper, we rethink the way state-based CRDTs should be designed, having
in mind the problematic shipping of the entire state. Our aim is to ship a
\emph{representation of the effect} of recent update operations on the state, rather
than the whole state, while preserving the idempotent nature of \emph{join}.
This ensures convergence over unreliable communication (on the contrary to op-based).
To achieve this, we introduce \emph{Delta State-based CRDTs} (\dsrdt): a state
is a join-semilattice that results from the join of multiple fine-grained
states, i.e., \emph{deltas}, generated by what we call
\emph{$\delta$-mutators}. \emph{$\delta$-mutators} are new versions of the datatype mutators that
return the effect of these mutators on the state. In this way, deltas can be
temporarily retained in a buffer to be shipped individually (or joined in groups) instead
of shipping the entire object. The changes to the local state are then
incorporated at other replicas by joining the shipped deltas with their own
states. 

The use of ``deltas'' (i.e., incremental states) 
may look intuitive in state dissemination; 
however, this is not the case for state-based CRDTs. 
The reason is that once a node receives an entire 
state, merging it locally is simple since there is no 
need to care about causality, as both states are self-contained (including meta-data).
The challenge in \dsrdt is that individual deltas are now ``state fragments'' and 
must be causally merged to maintain the correct semantics. 
This raises the following questions: 
is merging deltas semantically equivalent to merging
entire states in CRDTs? If not, what are the sufficient conditions to make this true 
in general? And under what constraints causal consistency is maintained? 
This paper answers these questions and presents corresponding proofs and examples.

We address the challenge of designing a new \dsrdt that conserves the correctness properties and semantics 
of an existing CRDT by  
 establishing a relation between the novel
$\delta$-mutators with the original CRDT mutators. 
We then show how to ensure causal consistency using deltas through introducing the concept 
of \emph{delta-interval} and the
\emph{causal delta-merging condition}. Based on these, we then present an
anti-entropy algorithm for \dsrdt, where sending and then joining
delta-intervals into another replica state produces the same effect as if the
entire state had been shipped and joined. 

As the area of CRDTs is relatively new, we illustrate our approach by
explaining a simple $counter$ $\dsrdt$
specification; then we introduce a challenging non-trivial specification
for a widely used datatype: Optimized Add-Wins Observed-Remove 
Sets~\cite{rep:opt:sh151}; and finally we present a novel design for 
an Optimized Multi-Value Register with meta-data reduction. 
In addition, we make a basic $\dsrdt$ C++ library available
online~\cite{deltaCode} for various CRDTs: 
GSet, 2PSet, GCounter, PNCounter, AWORSet, RWORSet, MVRegister, LWWSet, etc. 
Our experience shows that a
$\dsrdt$ version can be devised for most CRDTs, however, this requires some
design effort that varies with the complexity of different CRDTs. This is referred
to the ad-hoc way CRDTs are designed in general (which is also required
for $\dsrdt$s). To the best of our knowledge, no model has been 
introduced so far to make designing CRDTs generic instead of being type-specific.

\section{System Model}

Consider a distributed system with nodes containing local memory, with no
shared memory between them.
Any node can send messages to any other node. The network is asynchronous;
there is no global clock, no bound on the time a message takes to
arrive, and no bounds are set on relative processing speeds. The network is unreliable:
messages can be lost, duplicated or reordered (but are not corrupted). Some
messages will, however, eventually get through: if a node sends infinitely
many messages to another node, infinitely many of these will be delivered. In
particular, this means that there can be arbitrarily long partitions, but
these will eventually heal.
Nodes have access to durable storage; they can crash but will eventually 
recover with the content of the durable storage just before crash the occurred.
Durable state is written atomically at each state transition.
Each node has access to its globally unique identifier in a set $\ids$.

\section{A Background of State-based CRDTs}
\label{sec:back}
\emph{Conflict-Free Replicated Data Types}~\cite{rep:syn:sh138,syn:rep:sh143} (CRDTs) are distributed datatypes that allow different replicas of a distributed CRDT instance to diverge and ensures that, eventually, all replicas converge to the same state. State-based CRDTs achieve this through propagating updates of the local state by disseminating the entire state across replicas. The received states are then merged to remote states, leading to convergence (i.e., consistent states on all replicas).

A state-based CRDT consists of a triple $(S, M, Q)$, where $S$ is a
join-semi\-lattice~\cite{lattices:Brian:2002}, $Q$ is a set of query functions (which return some result without modifying the state), and $M$ is a set of mutators that perform updates; a mutator $m \in M$ takes a state $X \in S$ as input and returns a new state $X' = m(X)$. A join-semilattice is a set with a \emph{partial
order} $\pleq$ and a binary \emph{join} operation $\join$ that returns the
\emph{least upper bound }(LUB) of two elements in $S$; a \emph{join} is designed to be commutative, associative, and idempotent. Mutators are defined in such a way to be \emph{inflations}, i.e., for any mutator $m$ and state $X$, the following holds:
\[ X \pleq m(X) \]
In this way, for each replica there is a monotonic sequence of states, defined under the lattice partial order, where each subsequent state subsumes the previous state when joined elsewhere.

Both query and mutator operations are always available since they are performed
using the local state without requiring inter-replica communication; however,
as mutators are concurrently applied at distinct replicas, replica states will
likely diverge. Eventual convergence is then obtained using an
\emph{anti-entropy} protocol that periodically ships the entire local state to
other replicas.  Each replica merges the received state with its local state
using the \emph{join} operation in $S$.  Given the mathematical properties of
\emph{join}, if mutators stop being issued, all replicas eventually converge to
the same state. i.e. the least upper-bound of all states involved.  State-based
CRDTs are interesting as they demand little guarantees from the dissemination
layer, working under message loss, duplication, reordering, and temporary
network partitioning, without impacting availability and eventual convergence.

\begin{wrapfigure}{r}{0.5\textwidth}
\vspace*{-4\baselineskip}
  \begin{center}
\begin{eqnarray*}
\Sigma & = & \ids \map \nat \\
\sigma^0_i & = & \{ \} \\
\inc_i(m) & = &  m\{i \mapsto m(i)+1\}\\
\af{value}_i(m) & = & \sum_{i \in \ids} m(i) \\
m \join m' & = & \{ (i, \max(m(i),m'(i))) | i \in \ids \} \\
\end{eqnarray*}
\caption{State-based Counter CRDT; replica $i$.}
\label{fig:statectr}
\end{center}
\end{wrapfigure}

\textbf{Example.} Fig.~\ref{fig:statectr} represents a state-based increment-only counter. The
CRDT state $\Sigma$ is a map from replica identifiers to positive integers.
Initially, $\sigma^0_i$ is an empty map (assuming that unmapped keys
implicitly map to zero, and only non zero mappings are stored). A single 
mutator, i.e., $\inc$, is defined that increments the value corresponding to
the local replica $i$ (returning the updated map). The query operation
$\af{value}$ returns the counter value by adding the integers in the map
entries.  The join of two states is the point-wise maximum of the maps.

\textbf{Weaknesses.} The main weakness of state-based CRDTs is the cost of dissemination of updates,
as the full state is sent. In this simple example of counters, even though
increments only update the value corresponding to the local replica $i$, the
whole map will always be sent in messages though the other map values remained
intact (since no messages have been received and merged).

It would be interesting to only ship the recent modification incurred on the
state. This is, however, not possible with the current model of state-based
CRDTs as mutators always return a full state. Approaches which simply ship
operations (e.g., an ``increment $n$'' message), like in operation-based
CRDTs, require reliable communication (e.g., because increment is not
idempotent).
In contrast, our approach allows producing and encoding recent mutations in an
incremental way, while keeping the advantages of the state-based approach,
namely the idempotent, associative, and commutative properties of join.

\section{Delta-state CRDTs}
\label{sec:model}

We introduce \emph{Delta-State Conflict-Free Replicated Data Types}, or \dsrdt
for short, as a new kind of state-based CRDTs, in which \emph{delta-mutators}
are defined to return a \emph{delta-state}: a value in the same
join-semilattice which represents the updates induced by the mutator on the
current state.

\begin{definition}[Delta-mutator]
A delta-mutator $m^\delta$ is a function, corresponding to an update operation, which takes a state $X$ in a
join-semilattice $S$  as parameter and returns a delta-mutation $m^\delta(X)$, also in $S$.
\end{definition}

\begin{definition}[Delta-group]
A delta-group is inductively defined as either a delta-mutation or a join of
several delta-groups.
\end{definition}

\begin{definition}[\dsrdt]
A \dsrdt consists of a triple $(S, M^\delta, Q)$, where $S$ is a
join-semilattice, $M^\delta$ is a set of delta-mutators, and $Q$ a set of
query functions, where the state transition at each replica is given by either
joining the current state $X \in S$ with a delta-mutation:

\[
X' = X \join m^\delta(X),
\]
or joining the current state with some received delta-group $D$:
\[
X' = X \join D.
\]
\end{definition}

In a \dsrdt, the effect of applying a mutation, represented by a
delta-mutation $\delta = m^\delta(X)$, is decoupled from the resulting state
$X' = X \join \delta$, which allows shipping this $\delta$ rather than the
entire resulting state $X'$.  All state transitions in a $\dsrdt$, even upon
applying mutations locally, are the result of some join with the current state.
Unlike standard CRDT mutators, delta-mutators do not need to be inflations in order to inflate a state; this is however ensured by joining their output, i.e., deltas, into the current state.

In principle, a delta could be shipped immediately to remote replicas once
applied locally.
For efficiency reasons, multiple deltas returned by applying several
delta-mutators can be joined locally into a delta-group and retained in
a buffer. The delta-group can then be shipped to remote replicas to be joined
with their local states. Received delta-groups can optionally be joined into
their buffered delta-group, allowing transitive propagation of
deltas. A full state can be seen as a special (extreme) case of a delta-group.

If the causal order of operations is not important and the intended aim is
merely eventual convergence of states, then delta-groups can be shipped using
an unreliable dissemination layer that may drop, reorder, or duplicate
messages.  Delta-groups can always be re-transmitted and re-joined, possibly
out of order, or can simply be subsumed by a less frequent sending of the full
state, 
 e.g. for performance reasons or when
doing state transfers to new members. Due to space limits, we only address
causal consistency in this paper, while information about state convergence 
can be found in the associated technical report~\cite{delta:almeida:2014}.

\subsection{Delta-state decomposition of standard CRDTs}

A \dsrdt $(S,M^\delta,Q)$ is a \emph{delta-state decomposition} of a 
state-based CRDT $(S,M,Q)$, if for every mutator $m \in M$, we have a
corresponding mutator $m^\delta \in M^\delta$ such that, for every state $X \in
S$:

\[
m(X) = X \join m^\delta(X)
\]

This equation states that applying a delta-mutator and joining into the
current state should produce the same state transition as applying the
corresponding mutator of the standard CRDT.

Given an existing state-based CRDT (which is always a trivial decomposition of
itself, i.e., $m(X) = X \join m(X)$, as mutators are inflations), it will be
useful to find a non-trivial decomposition  such that delta-states returned
by delta-mutators in $M^\delta$ are smaller than the resulting 
state: \[ \size(m^\delta(X)) \ll \size(m(X)) \]

\subsection{Example: \dsrdt Counter}
\label{sec:counter}

\begin{wrapfigure}[14]{r}{0.5\textwidth}
\vspace*{-2\baselineskip}
\begin{center} \begin{eqnarray*}
\Sigma & = & \ids \map \nat \\ \sigma^0_i & = & \{ \} \\ \inc_i^\delta(m) & = &
\{i \mapsto m(i)+1\} \\ \af{value}_i(m) & = & \sum_{i \in \ids} m(i) \\ m \join m'
& = & \{ (i, \max(m(i),m'(i))) | i \in \ids \} \\ \end{eqnarray*}
\caption{A \dsrdt counter; replica $i$.} \label{fig:deltactr}
\end{center} 
\end{wrapfigure}

Fig.~\ref{fig:deltactr} depicts a \dsrdt
specification of a counter datatype that is a delta-state decomposition of the
state-based counter in Fig.~\ref{fig:statectr}.  The state, join and
$\af{value}$ query operation remain as before.  Only the mutator $\inc^\delta$
is newly defined, which increments the map entry corresponding to the local
replica and only returns that entry, instead of the full map as $\inc$ in the
state-based CRDT counter does.  This maintains the original semantics of the
counter while allowing the smaller deltas returned by the delta-mutator to be
sent, instead of the full map.  As before, the received payload (whether one or
more deltas) might not include entries for all keys in $\ids$, which are assumed
to have zero values.  The decomposition is easy to understand in this example since the equation $\inc_i(X) = X \join \inc_i^\delta(X)$ holds as $m\{i \mapsto m(i)+1\}= m \join \{i \mapsto m(i)+1\}$. In other words, the
single value for key $i$ in the delta, corresponding to the local replica
identifier, will overwrite the corresponding one in $m$ since the former maps to a higher value (i.e., using $\max$).
Here it can be noticed that: (1) a delta \emph{is} just a state, that can be joined possibly several times without requiring exactly-once delivery, and without being a
representation of the ``increment'' operation {(as in operation-based CRDTs), which is itself non-idempotent; (2) joining deltas into a delta-group and disseminating delta-groups at a lower rate than the operation rate reduces data communication overhead, since multiple increments from a given source can be collapsed into a single state counter.

\section{State Convergence}
\label{sec:conv}

In the \dsrdt execution model, and regardless of the anti-entropy
algorithm used, a replica state always evolves by joining the current
state with some \emph{delta}: either the result of a delta-mutation, or some
arbitrary delta-group (which itself can be expressed as a join of
delta-mutations).  Therefore, all states can be expressed as joins of
delta-mutations, which makes state convergence in \dsrdt easy to achieve: it is
enough that all delta-mutations generated in the system reach every replica,
as expressed by the following proposition.

\begin{proposition}(\dsrdt convergence)
\label{prop:convergence}
Consider a set of replicas of a \dsrdt object, replica $i$ evolving along a
sequence of states $X_i^0=\bot, X_i^1, \ldots$, each replica performing
delta-mutations of the form $m_{i,k}^\delta(X_i^k)$ at some subset of its
sequence of states, and evolving by joining the current state either
with self-generated deltas or with delta-groups received from others. If each
delta-mutation $m_{i,k}^\delta(X_i^k)$ produced at each replica is joined
(directly or as part of a delta-group) at least once with every
other replica, all replica states become equal.
\end{proposition}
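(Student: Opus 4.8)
The plan is to exploit a single structural fact: in a \dsrdt every state arises from $\bot$ by repeatedly joining in delta-mutations, and then to use the algebraic laws of $\join$ to reduce each state to a function of a mere \emph{set} of delta-mutations. First I would prove, by induction along each replica's sequence $X_i^0, X_i^1, \ldots$, that every state is a join of finitely many delta-mutations. The base case is $X_i^0 = \bot$, the empty join. For the inductive step, each transition has one of two forms: $X_i^{k+1} = X_i^k \join m_{i,k}^\delta(X_i^k)$, which appends a single delta-mutation, or $X_i^{k+1} = X_i^k \join D$ for a received delta-group $D$; by the definition of a delta-group, $D$ is itself a join of delta-mutations, so in both cases $X_i^{k+1}$ remains a join of delta-mutations.

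Let $\Delta$ denote the set of all delta-mutation values $m_{i,k}^\delta(X_i^k)$ produced anywhere in the execution. The key step is to observe that, because $\join$ is commutative, associative, and idempotent, the join of any collection of elements of $\Delta$ depends only on the \emph{set} of distinct elements joined---neither their order nor their multiplicity matters. Hence each replica's current state can be written as $\bigjoin S_j$ for some $S_j \subseteq \Delta$, where $S_j$ is exactly the set of delta-mutations so far incorporated into replica $j$ (directly or inside some delta-group). This is precisely the point at which the unreliability of the channel becomes harmless: duplication is absorbed by idempotence, and reordering by commutativity and associativity.

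It then remains to show that the hypothesis forces every $S_j$ to equal the whole of $\Delta$. Each replica owns all of its own delta-mutations by construction, since the transition $X_i^{k+1} = X_i^k \join m_{i,k}^\delta(X_i^k)$ places $m_{i,k}^\delta(X_i^k)$ into $X_i^{k+1}$; and as subsequent states are obtained only by further joins, $m_{i,k}^\delta(X_i^k) \pleq X_i^{k'}$ for every $k' > k$, so the set $S_i$ only grows and the element stays incorporated. The hypothesis supplies the remaining elements: every delta-mutation produced at any replica is eventually joined at least once into every other replica's state, so once all such joins have taken place we have $S_j = \Delta$ for every $j$. Therefore every replica state equals $\bigjoin \Delta$, and all replica states are equal.

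I expect the main obstacle to be the middle step---justifying rigorously that a state is determined by the \emph{set} of incorporated delta-mutations rather than by the particular sequence of joins performed. This is what distinguishes the delta setting from naive ``ship the operations'' schemes, and it is exactly where the semilattice laws (especially idempotence) are indispensable; care is also needed to handle the recursive definition of delta-groups, so that a received group contributes only values already counted in $\Delta$.
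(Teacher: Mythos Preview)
Your proposal is correct and follows essentially the same approach as the paper: the paper's own proof is the single sentence ``Trivial, given the associativity, commutativity, and idempotence of the join operation in any join-semilattice,'' and your argument is a careful unpacking of exactly that sentence---showing by induction that each state is a join of delta-mutations, then using the semilattice laws to reduce to the underlying set, which the hypothesis forces to be all of $\Delta$.
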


\begin{proof}
Trivial, given the associativity, commutativity, and idempotence of the join
operation in any join-semilattice.
\end{proof}

This opens up the possibility of having anti-entropy algorithms that are only devoted to enforce convergence, without necessarily providing causal consistency (enforced in standard CRDTs); thus, making a trade-off between performance and consistency guarantees.  For
instance, in a counter (e.g., for the number of \emph{likes} on a social
network), it may not be critical to have causal consistency, but merely not to
lose increments and achieve convergence.

\subsection{Basic Anti-Entropy Algorithm}
A basic anti-entropy algorithm that ensures eventual convergence in \dsrdt is presented in Algorithm~\ref{alg:basic-algo}. For the node corresponding to replica $i$, the durable state, which persists after a
crash, is simply the \dsrdt state $X_i$. The volatile state $D$ stores a
delta-group that is used to accumulate deltas before eventually sending it to other
replicas.  Without loss of generality, we assume that the join-semilattice has a
bottom $\bot$, which is the initial value for both $X_i$ and $D_i$.

{
\auxfun{send}
\auxfun{receive}
\auxfun{operation}
\auxfun{ack}
\auxfun{random}

\begin{algorithm}[t]

\begin{multicols}{2}
\DontPrintSemicolon
\SetKwBlock{inputs}{inputs:}{}
\SetKwBlock{dstate}{durable state:}{}
\SetKwBlock{vstate}{volatile state:}{}
\SetKwBlock{periodically}{periodically}{}
\SetKwBlock{on}{on}{}

\inputs{
  $n_i \in \pow\ids$, set of neighbors \;
  $t_i \in \bool$, true for transitive mode\;
  $\af{choose}_i \in S \times S \to S$, ship state or delta\;
}

\dstate{
  $X_i \in S$, CRDT state; initially $X_i = \bot$ \;
}

\vstate{
  $D_i \in S$, join of deltas; initially $D_i = \bot$ \;
}

\on({$\operation_i(m^\delta)$}){
  $d = m^\delta(X_i)$ \;
  $X_i' = X_i \join d$ \;
  $D_i' = D_i \join d$ \;
}

\BlankLine
\periodically(){
  $m = \af{choose}_i(X_i, D_i)$ \;
  \For{$j \in n_i$}{
    $\send_{i,j}(m)$ \;
  }
  $D_i' = \bot$ \;
}
\BlankLine
\on({$\receive_{j,i}(d)$}){
  $X_i' = X_i \join d$ \;
  \uIf{$t_i$}{ $D_i' = D_i \join d$ \; }
  \uElse { $D_i' = D_i$ \; }
}
\end{multicols}
\bigskip
\caption{Basic anti-entropy algorithm for \dsrdt. \label{alg:basic-algo}}
\end{algorithm}
}

When an operation is performed, the corresponding delta-mutator $m^\delta$ is
applied to the current state of $X_i$, generating a delta $d$. This delta is joined
both with $X_i$ to produce a new state, and with $D$.
In the same spirit of standard state based CRDTs, a node sends its messages in a periodic fashion, where the message payload is either the delta-group $D_i$ or the full state $X_i$; this decision is made by the function $\af{choose}_i$ which returns one of them. To keep the algorithm simple, a node simply broadcasts its messages without distinguishing between neighbors. After each send, the delta-group is reset to $\bot$.

Once a message is received, the payload $d$ is joined into the current \dsrdt
state.  The basic algorithm operates in two modes: (1) a \emph{transitive} mode (when $t_i$ is true) in which $m$ is also joined into $D$, allowing transitive propagation of delta-mutations; meaning that, deltas received at node $i$ from some node $j$ can later be sent to some other node $k$; (2) a
\emph{direct} mode where a delta-group is exclusively the join of local
delta-mutations ($j$ must send its deltas directly to $k$).
The decisions of whether to send a delta-group versus the full state
(typically less periodically), and whether to use the transitive or direct
mode are out of the scope of this paper. In general, decisions can be made considering many criteria like delta-groups size, state size, message loss distribution assumptions, and network topology.

\section{Causal Consistency}
\label{sec:causal}

Traditional state-based CRDTs converge using joins of the full state, which
implicitly ensures per-object causal consistency~\cite{DBLP:conf/popl/BurckhardtGYZ14}: each state of
some replica of an object reflects the causal past of operations on the object
(either applied locally, or applied at other replicas and transitively
joined).

Therefore, it is desirable to have \dsrdt{}s offer the same causal-consistency guarantees that standard state-based CRDTs offer. This raises the question about how can delta propagation and merging of \dsrdt be constrained (and expressed in an anti-entropy algorithm) in such a manner to give the same results as if a standard state-based CRDT was used. Towards this objective, it is useful to define a particular kind of delta-group, which we call a \emph{delta-interval}:

 \begin{definition}[Delta-interval]
Given a replica $i$ progressing along the states
$X_i^0, X_ i^1 , \ldots$, by joining delta $d_i^k$ (either local
delta-mutation or received delta-group) into $X_i^k$ to obtain
$X_i^{k+1}$, a delta-interval $\dint iab$ is a delta-group resulting from
joining deltas $d_i^a, \ldots, d_i^{b-1}$:
 \[
\dint iab = \bigjoin \{ d_i^k | a \leq k < b \}
 \]
 \end{definition}

The use of delta-intervals in anti-entropy algorithms will be a key ingredient
towards achieving causal consistency. We now define a restricted kind of
anti-entropy algorithm for \dsrdt{}s.

\begin{definition}[Delta-interval-based anti-entropy algorithm]
A given anti-entropy algorithm for \dsrdt{}s is delta-interval-based, if all
deltas sent to other replicas are
delta-intervals.
\end{definition}

Moreover, to achieve causal consistency the next condition must satisfied:

\begin{definition}[Causal delta-merging condition]
\label{def:c_cond}
A delta-interval based anti-entropy algorithm is said to satisfy the causal
delta-merging condition if the algorithm only joins $\dint jab$ from replica
$j$ into state $X_i$ of replica $i$ that satisfy:
\[
X_i \pgeq X_j^a.
\]
\end{definition}

This means that a delta-interval is
only joined into states that at least reflect (i.e., subsume) the state into which the first delta in the interval was previously joined.
The causal delta-merging condition is important since any delta-interval
based anti-entropy algorithm of a \dsrdt that satisfies it, can be used to
obtain the same outcome of standard CRDTs; this is formally stated in Proposition~\ref{prop:corr}.

\begin{proposition}(CRDT and \dsrdt correspondence)
\label{prop:corr}
Let $(S,M,Q)$ be a standard state-based CRDT and $(S,M^\delta,Q)$ a corresponding delta-state
decomposition. Any \dsrdt state reachable by an execution $E^\delta$
over $(S,M^\delta,Q)$, by a delta-interval based anti-entropy algorithm $A^\delta$
satisfying the causal delta-merging condition, is equal to a state
resulting from an execution $E$ over $(S,M,Q)$, having the corresponding data-type
operations, by an anti-entropy algorithm $A$ for state-based CRDTs.
\end{proposition}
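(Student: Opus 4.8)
The plan is to prove the correspondence by exhibiting, for the given $\dsrdt$ execution $E^\delta$, a matching standard state-based execution $E$ that passes through the same per-replica states. I would proceed by induction on the number of transitions of $E^\delta$, carrying the invariant that after matching a prefix, the state of every replica in $E^\delta$ equals its state in the constructed execution $E$. In the base case all replicas start at $\bot$ in both executions. The construction of $E$ mirrors $E^\delta$ transition by transition: a local delta-mutation at replica $i$ is matched by applying the corresponding standard mutator $m$ at $i$, and a receive of a delta-interval $\dint jab$ at $i$ is matched by a state-based send of $j$'s full state $X_j^b$ followed by its merge at $i$.

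The local-mutation case is immediate from the defining property of the delta-state decomposition. If $E^\delta$ performs $X_i' = X_i \join m^\delta(X_i)$, then by the decomposition equation $m(X) = X \join m^\delta(X)$ this equals $m(X_i)$, which is exactly the new state produced by applying the standard mutator $m$ in $E$; so the invariant is preserved.

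The heart of the argument is the receive case, which I would isolate as a lemma: for any replica $j$ and indices $a \leq b$, the full state satisfies $X_j^b = X_j^a \join \dint jab$. This follows by a short induction on $b$ using $X_j^{k+1} = X_j^k \join d_j^k$ together with associativity and commutativity of join, and it holds uniformly whether each $d_j^k$ is a local delta-mutation or a received delta-group, so the transitive mode needs no special treatment. Given this, when $E^\delta$ merges $\dint jab$ into $X_i$, the causal delta-merging condition guarantees $X_i \pgeq X_j^a$, hence $X_i \join X_j^a = X_i$, and therefore
\[
X_i \join \dint jab = X_i \join X_j^a \join \dint jab = X_i \join X_j^b .
\]
Thus joining the delta-interval yields precisely the state that results from joining $j$'s full state $X_j^b$, which is the effect of the matching send and merge in $E$; combined with the induction hypothesis (which equates $X_j^b$ in both executions), the invariant is preserved.

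The main obstacle I anticipate is not the algebra but the bookkeeping that makes $E$ a legitimate state-based execution: I must ensure that the full state $X_j^b$ which $E$ ships is really a state replica $j$ occupies at the matched point. This is where the induction hypothesis is essential, since it equates $j$'s $b$-th state across both executions, and where I would use the asynchrony of the model to schedule $j$'s send at step $b$ while delaying delivery until the corresponding receive event at $i$; note that $E^\delta$ itself already witnesses that $j$ reaches step $b$ before $i$'s receive, so this schedule is consistent. I would also check the edge case $\dint jaa = \bot$ so that empty intervals behave correctly, and emphasise that the causal delta-merging condition is exactly the hypothesis needed to collapse $X_i \join X_j^a$ to $X_i$ in the key identity above.
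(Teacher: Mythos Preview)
Your proposal is correct and follows essentially the same approach as the paper: both argue by simulation, matching each delta-mutation with the corresponding standard mutator and each received delta-interval $\dint jab$ with a full-state merge of $X_j^b$, and both use the causal delta-merging condition $X_i \pgeq X_j^a$ to obtain $X_i \join \dint jab = X_i \join X_j^b$. The only cosmetic difference is that you isolate $X_j^b = X_j^a \join \dint jab$ as a separate lemma, whereas the paper derives it inline via a telescoping chain $X_i^k \join X_j^a \join d_j^a \join \cdots \join d_j^{b-1} = X_i^k \join X_j^b$.
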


\begin{proof}
Please see the associated technical report~\cite{delta:almeida:2014}.
\end{proof}

\begin{corollary}(\dsrdt causal consistency)
\label{prop:causal-consistency}
Any \dsrdt in which states are propagated and joined using a delta-interval-based anti-entropy algorithm satisfying the causal delta-merging condition ensures causal consistency.
\end{corollary}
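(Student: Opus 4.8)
The plan is to derive the corollary almost directly from Proposition~\ref{prop:corr}, by transporting the causal-consistency guarantee that standard state-based CRDTs already enjoy across the correspondence established there. First I would recall the fact noted at the start of Section~\ref{sec:causal} (and cited there): every execution of a standard state-based CRDT $(S,M,Q)$, whose replicas evolve solely by local mutation and by joining full states received via anti-entropy, is per-object causally consistent. Concretely, this means that each reachable replica state reflects a set of operations that is downward closed under the happens-before relation: whenever a state reflects an operation it also reflects every operation in that operation's causal past.

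Next I would fix an arbitrary \dsrdt execution $E^\delta$ over $(S,M^\delta,Q)$ driven by a delta-interval-based anti-entropy algorithm satisfying the causal delta-merging condition, and an arbitrary reachable state $X$ occurring in $E^\delta$. By Proposition~\ref{prop:corr} there is a matching execution $E$ of the standard CRDT $(S,M,Q)$, over the corresponding data-type operations and some state-based anti-entropy algorithm $A$, in which $X$ appears as a reachable state. The key auxiliary claim I would then argue is that this correspondence preserves causal structure: because $E$ and $E^\delta$ issue the same operations at the same replicas and mirror one another's message deliveries, the happens-before relation over operations is identical in the two executions, and the set of operations \emph{reflected} in $X$ coincides with the set reflected in the matching standard-CRDT state.

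Combining the two ingredients, the set of operations reflected in $X$ is causally closed, since it equals the reflected set of the matching standard-CRDT state, which is causally closed by the known result. As $X$ was an arbitrary reachable state of $E^\delta$, every reachable \dsrdt state reflects a causally closed set of operations, which is precisely per-object causal consistency for $E^\delta$; since $E^\delta$ was arbitrary, the corollary follows.

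The main obstacle I anticipate is not lattice algebra but making precise the auxiliary claim that Proposition~\ref{prop:corr}'s pointwise equality of states lifts to an equality of reflected operation sets together with agreement of the happens-before relations. Discharging this requires unfolding the construction of the matching execution $E$ (carried out in the associated technical report) to verify that each delta-mutation of $E^\delta$ corresponds to exactly one mutator application in $E$, and that delivering a delta-interval $\dint jab$ induces exactly the causal dependencies that delivering the corresponding full state would. The causal delta-merging condition $X_i \pgeq X_j^a$ is what underwrites this alignment: it guarantees that a delta-interval is absorbed only into a state already subsuming the state from which the interval began, so no delta is ever applied ahead of its causal prerequisites, which is exactly the invariant needed to identify the two happens-before relations.
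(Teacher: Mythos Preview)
Your approach is correct and matches the paper's own proof, which is the single line ``From Proposition~\ref{prop:corr} and causal consistency of state-based CRDTs.'' You have simply unpacked what the paper leaves implicit: that the correspondence of Proposition~\ref{prop:corr} aligns operations, deliveries, and hence happens-before, so that causal closure transfers from the standard execution to the \dsrdt execution.
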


\begin{proof}
From Proposition~\ref{prop:corr} and causal consistency of state-based CRDTs. 
\end{proof}

\subsection{Anti-Entropy Algorithm for Causal Consistency}
Algorithm~\ref{alg:causal-algo} is a delta-interval based anti-entropy
algorithm which enforces the causal delta-merging condition. It can be used
whenever the causal consistency guarantees of standard state-based CRDTs are
needed. For simplicity, it excludes some optimizations that are
important, but easy to derive, in practice. The algorithm distinguishes
neighbor nodes, and only sends them delta-intervals that are joined at the
receiving node, obeying the delta-merging condition.

Each node $i$ keeps a contiguous sequence of deltas $d_i^l, \ldots, d_i^u$ in a
map $D$ from integers to deltas, with $l = \min(\dom(D))$ and $u =
\max(\dom(D))$. The sequence numbers of deltas are obtained from the counter
$c_i$ that is incremented when a delta (whether a delta-mutation or
delta-interval received) is joined with the current state.  Each node $i$
keeps an acknowledgments map $A$ that stores, for each neighbor $j$, the
largest index $b$ for all delta-intervals $\dint iab$ acknowledged by $j$
(after $j$ receives $\dint iab$ from $i$ and joins it into $X_j$).

Node $i$ sends a delta-interval $d=\dint iab$ with a $(\af{delta}, d,
b)$ message; the receiving node $j$, after joining $\dint iab$ into its replica
state, replies with an acknowledgment message $(\af{ack}, b)$; if an ack
from $j$ was successfully received by node $i$, it updates the entry of $j$ in the acknowledgment map, using the $\max$ function. This handles possible old duplicates and messages arriving out of order.

{
\auxfun{send}
\auxfun{receive}
\auxfun{operation}
\auxfun{ack}
\auxfun{random}

\begin{algorithm}[t]
\begin{multicols}{2}
\DontPrintSemicolon
\SetKwBlock{inputs}{inputs:}{}
\SetKwBlock{dstate}{durable state:}{}
\SetKwBlock{vstate}{volatile state:}{}
\SetKwBlock{periodically}{periodically}{}
\SetKwBlock{on}{on}{}

\inputs{
  $n_i \in \pow\ids$, set of neighbors \;
}

\dstate{
  $X_i \in S$, CRDT state; initially $X_i = \bot$ \;
  $c_i \in \nat$, sequence number; initially $c_i = 0$ \;
}

\vstate{
  $D_i \in \nat \map S$, sequence of deltas; initially $D_i = \{\}$ \;
  $A_i \in \ids \map \nat$, acknowledges map; initially $A_i = \{\}$ \;
}

\on({$\receive_{j,i}(\af{delta}, d, n)$}){
  \uIf{$d \not \pleq X_i$}{
    $X_i' = X_i \join d$ \;
    $D_i' = D_i \{ c_i \mapsto d \}$ \;
    $c_i' = c_i + 1$ \;
  }
  $\send_{i,j}(\ack, n)$ \;
}

\on({$\receive_{j,i}(\ack, n)$}){
  $A_i' = A_i \{ j \mapsto \max(A_i(j), n)\}$ \;
}

\on({$\operation_i(m^\delta)$}){
  $d = m^\delta(X_i)$ \;
  $X_i' = X_i \join d$ \;
  $D_i' = D_i \{ c_i \mapsto d \}$ \;
  $c_i' = c_i + 1$ \;
}

\periodically( // ship delta-interval or state){
  $j = \random(n_i)$ \;
  \uIf{$D_i = \{\} \lor \min(\dom(D_i)) > A_i(j)$}{
    $d = X_i$
  } \uElse{
    $d = \bigjoin\{ D_i(l) | A_i(j) \leq l < c_i \}$ \;
  }
  \uIf{$A_i(j) < c_i$}{
    $\send_{i,j}(\af{delta}, d, c_i)$ \;
  }
}

\periodically( // garbage collect deltas){
  $l = \min \{ n | (\_, n) \in A_i \}$ \;
  $D_i' = \{ (n,d) \in D_i | n \geq l \}$ \;
}
\BlankLine
\end{multicols}
\bigskip
\caption{Anti-entropy algorithm ensuring causal consistency of \dsrdt.}
\label{alg:causal-algo}

\end{algorithm}
}

Like the \dsrdt state, the counter $c_i$ is also kept in a durable storage.
This is essential to avoid conflicts after potential crash and recovery
incidents.  Otherwise, there would be the danger of receiving some delayed ack,
for a delta-interval sent before crashing, causing the node to skip sending
some deltas generated after recovery, thus violating the delta-merging
condition.

The algorithm for node $i$ periodically picks a random neighbor $j$. In
principle, $i$ sends the join of all deltas starting from the latest delta
acked by $j$ and forward. Exceptionally, $i$ sends the entire state in two
cases: (1) if the sequence of deltas $D_i$ is empty, or (2) if $j$ is
expecting from $i$ a delta that was already removed from $D_i$ (e.g., after a
crash and recovery, when both deltas and the ack map, being volatile state,
are lost); $i$ tracks this in $A_i(j)$.  To garbage collect old deltas, the
algorithm periodically removes the deltas that have been acked by \emph{all}
neighbors.

\begin{proposition}
Algorithm~\ref{alg:causal-algo} produces the same reachable states as a
standard algorithm over a CRDT for which the \dsrdt is a decomposition.
\label{prop:causal-algo}
\end{proposition}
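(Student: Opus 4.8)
The plan is to reduce the statement to Proposition~\ref{prop:corr}, whose hypotheses are that the anti-entropy algorithm is delta-interval based and that it satisfies the causal delta-merging condition of Definition~\ref{def:c_cond}. Once both are verified for Algorithm~\ref{alg:causal-algo}, Proposition~\ref{prop:corr} immediately yields that every state reachable by Algorithm~\ref{alg:causal-algo} equals a state reachable by a standard state-based execution over $(S,M,Q)$; the reverse inclusion is routine, since a full-state transfer is the degenerate delta-interval $\dint i0{c_i}$, and because the upper index shipped is always the current counter, joining a delta-interval at a receiver that already subsumes its lower-index prefix produces exactly the same state as joining the whole sender state, so the two state-reachability relations coincide. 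The real work is therefore the two verification steps.

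First I would fix the bookkeeping. Writing $d_i^0, d_i^1, \ldots$ for the successive deltas that actually inflate $X_i$ (the operation deltas and the received deltas that pass the $d \not\pleq X_i$ guard), the durable counter $c_i$ counts them, $D_i$ stores $D_i(k) = d_i^k$, and $X_i = \dint i0{c_i}$, because $X_i = \bot$ initially and every durable transition of $X_i$ is a join with the current delta (discarded deltas satisfy $d \pleq X_i$ and contribute nothing, so the guard does not disturb this identity). With this, the delta-interval step is direct: the periodic action ships either $X_i = \dint i0{c_i}$ or $\bigjoin\{D_i(l) \mid A_i(j) \le l < c_i\} = \dint i{A_i(j)}{c_i}$; the latter is well formed because the guard sends the full state whenever $\min(\dom(D_i)) > A_i(j)$, while garbage collection removes only indices below $\min_k A_i(k) \le A_i(j)$, so $D_i$ retains the whole contiguous range $[A_i(j), c_i)$. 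Hence every shipped delta is a delta-interval.

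The heart of the proof is the merging condition. Adopting the naming of Definition~\ref{def:c_cond}, let $j$ be the sender and $i$ the receiver, so $i$ joins $d = \dint jan$ with $a = A_j(i)$ at $j$'s send time (with $a = 0$ for a full-state send). I would prove, by induction over the events of the execution in real-time order, the invariant
\[ X_i \pgeq X_j^{A_j(i)} \quad\text{for every ordered pair } (j,i). \]
Three observations keep it stable: each $X_i$ evolves only by joins and is thus monotone non-decreasing; the history states $X_j^a = \dint j0a$ are fixed once reached and non-decreasing in $a$; and $A_j(i)$ grows only when $j$ receives $(\af{ack}, n)$ and sets $A_j(i) = \max(A_j(i), n)$, the $\max$ absorbing duplicated and reordered acks. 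The only transition that can threaten the invariant thus raises $A_j(i)$ to some $n$, and I must show $X_i \pgeq X_j^n$ at that instant. That ack was emitted earlier by $i$ right after receiving $(\af{delta}, \dint j{a'}n, n)$, where $a'$ was $A_j(i)$ when $j$ shipped it; the induction hypothesis at the ship event gives $X_i \pgeq X_j^{a'}$, monotonicity carries it to $i$'s reception, the join (or, in the vacuous $d \pleq X_i$ case, the guard together with $X_i \pgeq X_j^{a'}$) yields $X_i \pgeq X_j^{a'} \join \dint j{a'}n = X_j^n$, and monotonicity carries this forward to the moment $j$ processes the ack.

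The one place this can break, and the step I expect to be the main obstacle, is crash recovery, since $D_i$ and $A_i$ are volatile. Here I would lean on $X_i$ and $c_i$ being durable with $c_i$ non-decreasing: recovery resets every $A_j(i)$ to $0$ (trivially satisfying the invariant) and forces full-state sends, while durability of $c_i$ forbids ever reusing an index for a fresh delta, so a stale ack $(\af{ack}, n)$ that outlives a crash still refers to a prefix $X_j^n$ subsumed by the durable $X_j$ and cannot induce $j$ to skip genuinely new deltas. This is precisely the hazard noted after the algorithm; once the invariant is shown to survive it, both hypotheses of Proposition~\ref{prop:corr} hold and the claim follows.
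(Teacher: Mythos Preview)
Your proposal is correct and follows essentially the same strategy as the paper: reduce to Proposition~\ref{prop:corr} by verifying the causal delta-merging condition, then argue via the ack protocol that the receiver already subsumes the sender's prefix state. The paper phrases this as a backward chain of ack/send pairs recursing down to index~$0$ (i.e., $\dint jab$ is joined only after $i$ acked $a$, which happened only after $i$ joined some $\dint jka$, etc., until $k=0$, giving $X_i \pgeq \dint j0a = X_j^a$), whereas you package the same idea as the forward invariant $X_i \pgeq X_j^{A_j(i)}$ maintained by induction on events; you are also more explicit than the paper about delta-interval well-formedness, garbage collection, the reverse inclusion, and crash recovery, but the core argument is identical.
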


\begin{proof}
Please see the associated technical report~\cite{delta:almeida:2014}.
\end{proof}

\section{\dsrdt{}s for Add-Wins OR-Sets}

An Add-wins Observed-Remove Set (OR-set) is a well-known CRDT datatype that offers the
same sequential semantics of a sequential set and adopts a specific
resolution semantics for operations that concurrently add and remove the same
element. Add-wins means that an add prevails over a concurrent remove. Remove
operations, however, only affect elements added by causally preceding adds. The 
purpose of these \dsrdt OR-set versions is to design $\delta$-mutators that return 
small deltas to be lightly disseminated, as discussed above, instead of shipping 
the entire state as in classical CRDTs~\cite{rep:syn:sh138,syn:rep:sh143,rep:opt:sh151}.

\subsection{Add-wins OR-Set with tombstones}
Fig.~\ref{subf:orset} depicts a simple, but inefficient, \dsrdt implementation
of a state-based add-wins OR-Set. The state $\Sigma$ consists of a set of
tagged elements and a set of tags, acting as tombstones. Globally unique tags
of the form $\ids \times \nat$ are used and ensured by pairing a replica 
identifier in $\ids$ with a monotonically increasing natural counter. 
Once an element $e \in E$ is added to the set, the delta-mutator $\add^\delta$
creates a globally unique tag by incrementing the highest tag present in its 
local state and that was created by replica $i$ itself ($\max$ returns 0 if 
no tag is present). This tag is paired with value $e$ and stored as a new 
unique triple in $s$. Since an ``add'' wins any concurrent ``remove'', 
removing an element $e$ should only be tombstoned if it was 
preceded by an add operation (i.e., the element is in $s$), otherwise it has
no effect. Consequently, the delta-mutator $\rmv^\delta$ retains in the tombstone set
all tags associated to element $e$, being removed from the local state. This is essential 
to prevent a removed element to reappear once the local state is merged with 
another replica state that still have that element (i.e., it has not been removed yet 
remotely as replicas are loosely coupled).
The function $\elements$ returns only the elements that are added but not yet tombstoned.
Join $\join$ simply unions the respective sets that are, therefore,
both grow-only.

\begin{figure}[t]
\scriptsize
\hspace*{-5mm}
\begin{subfigure}[b]{0.5\textwidth}
\begin{eqnarray*}
\Sigma & = & \mathcal{P}(\ids \times \nat \times E) \times \mathcal{P}(\ids \times \nat) \\ 
\sigma_i^0 & = & (\{\},\{\}) \\
\add_i^\delta(e,(s,t)) & = &
(\{(i,n+1,e)\},\{\}) \\
&& \textrm{with } n = \max(\{k | (i,k,\_) \in s\}) \\
\rmv_i^\delta(e,(s,t)) & = & (\{\},\{(j,n) | (j,n,e) \in s\})\\
\elements_i((s,t)) & = & \{e | (j,n,e) \in s \land (j,n) \not\in t\}\\
(s,t) \join (s',t') & = & (s \cup s', t \cup t') \\
&&
\end{eqnarray*}

\caption{With Tombstones}
\label{subf:orset}
\end{subfigure}
~
\begin{subfigure}[b]{0.5\textwidth}
\begin{eqnarray*}
\Sigma & = & \mathcal{P}(\ids \times \nat \times E) \times \mathcal{P}(\ids \times \nat) \\ 
\sigma_i^0 & = & (\{\},\{\}) \\
\add_i^\delta(e,(s,c)) & = &
(\{(i,n+1,e)\}, \{(i,n+1)\}) \\
&& \textrm{with } n = \max(\{k | (i,k) \in c\}) \\
\rmv_i^\delta(e,(s,c)) & = & (\{\},\{(j,n) | (j,n,e) \in s\}) \\
\elements_i((s,c)) & = & \{e | (j,n,e) \in s\}\\
(s,c) \join (s',c') & = & ((s \cap s') \cup
\{ (i,n,e) \in s | (i,n) \not\in c' \} \\
&& \cup
\{ (i,n,e) \in s' | (i,n) \not\in c \}
, c \cup c')
\end{eqnarray*}
\caption{Without Tombstones (optimized)}
\label{subf:optorset}
\end{subfigure}
\caption{Add-wins observed-remove \dsrdt set, replica $i$.}
\label{fig:orset}
\end{figure}

\subsection{Optimized Add-wins OR-Set}
A more efficient design is presented in Fig.~\ref{subf:optorset} allowing 
also the set of tagged elements (i.e., tombstone set above) to shrink as elements are removed. 
This design offers the same semantics and have a similar state structure to the former;
however, it has a different behavior. Now, $\elements$ returns all the elements in 
the tagged set $s$, without consulting $t$ as before. 
Added and removed items are now tagged in the \emph{causal context set} $c$.
Although, the set $c$ and $t$ look similar in structure, they have a different behavior
(we call it $c$ instead of $t$ to remove this confusion): a tombstone set $t$
simply stores all removed elements tags, while $c$ retains only the causal information needed to 
add/remove an element. For presentation simplicity, $c$ in Fig.~\ref{subf:optorset} simply retains
all removed elements tags; however, after compression, $c$ will be very concise and look 
different from $t$; this is explained in the next section.

Adding an element creates a unique tag by resorting to the causal context $c$ (instead of $s$). 
The tag is paired with the element and added to $s$ (as before). 
The difference is that the new tag is also added to the causal context set $c$. 
The delta-mutator $\rmv^\delta$ is the same as before, adding all tags associated to the
element being removed to $c$. The desired semantics are maintained by the novel join
operation $\join$. To join two states, their causal contexts $c$ are simply
unioned; whereas, the new element set $s$ only preserves: (1) the triples
present in both sets (therefore, not removed in either), and also (2) any
triple present in one of the sets and whose tag is not present in the causal
context of the other state.

\subsubsection{Causal Context Compression}
In practice, the causal context $c$ can be efficiently compressed without any loss
of information. When using an anti-entropy algorithm that provides causal
consistency, e.g., Algorithm~\ref{alg:causal-algo}, then for each replica state $X_i=(s_i, c_i)$ and replica identifier $j
\in \ids$, we have a contiguous sequence:
\[
1 \leq n \leq \max(\{ k | (j,k) \in c_i\}) \implies (j,n) \in c_i.
\]
Thus, the causal context can always be encoded as a compact version 
vector~\cite{Parker:1983:DMI:1313337.1313753} $\ids \map \nat$ that keeps the maximum sequence number for each replica.
Even under non-causal anti-entropy, compression is still possible by
keeping a version vector that encodes the offset of the contiguous sequence of tags from each replica, together with a set for the non-contiguous tags. As anti-entropy
proceeds, each tag is eventually encoded in the vector, and thus the set remains
typically small.
Compression is less likely for the causal context of delta-groups in transit
or buffered to be sent, but those contexts are only transient and smaller than
those in the actual replica states. Moreover, the same techniques that encode contiguous sequences of tags can also be used for transient context compression~\cite{MGS14}.

\section{Optimized Multi-value Register \dsrdt{}}

\begin{figure}[t]
\scriptsize
\begin{eqnarray*}
\Sigma & = & \mathcal{P}(\ids \times \nat \times V) \times \mathcal{P}(\ids \times \nat) \\ 
\sigma_i^0 & = & (\{\},\{\}) \\
  \mathsf{wr}_i^\delta(v,(s,c)) & = &  (\{(i,n+1,v)\},\{(i,n+1)\} \cup \{(j,m) | (j,m,\_) \in s\}) \textrm{ with } n = \max(\{k | (i,k) \in c\}) \\
\rd_i((s,c)) & = & \{v | (j,n,v) \in s\}\\
(s,c) \join (s',c') & = & ((s \cap s') \cup
\{ (i,n,v) \in s | (i,n) \not\in c' \} 
\cup
\{ (i,n,v) \in s' | (i,n) \not\in c \}
, c \cup c')
\end{eqnarray*}
\caption{Optimized \dsrdt multi-value register, replica $i$.}
\label{fig:mvreg}
\end{figure}
Multi-Value Registers (MVR) are popular constructions in which a read operation 
returns the set of values concurrently written, but not causally overwritten; 
these values are then reduced to a single value by applications~\cite{app:rep:optim:1606}.
Until now, these types have been implemented by assigning a version vector to each written
value~\cite{rep:syn:sh138,DBLP:conf/popl/BurckhardtGYZ14}. In
Figure~\ref{fig:mvreg}, we show that the optimization that was developed for
Sets, can also be used to compactly tag the values in a multi-value register.
On a write operation $\mathsf{wr}$, it is enough to assign a new scalar tag,
from $\ids \times \nat$, using a replica id $i$ and counter to uniquely tag  
the written value $v$. To ensure that values overwritten are
deleted, the produced causal context $c$ lists all tags associated to those values. Since
those values are absent from the payload set $s$ they will be deleted in replicas that
still have them, applying join definition $\join$ (that is in common with
Figure~\ref{subf:optorset}). The causal context compression techniques
defined earlier also apply here.  

\section{Message Complexity}
\label{sec:complexity}

Our delta-based framework, \dsrdt, clearly introduces significant cost
improvements on messaging.  Despite being a generic framework, \dsrdt requires
delta mutators to be
defined per datatype.  This makes the bit-message complexity datatype-based
rather than generic.  To give an intuition about this complexity, we address
the three datatypes introduced above: $counter$, OR-Set, and MVR.

\textbf{Counters.} In classical state-based counter CRDTs, the entire map of the $counter$ is
shipped. As the map-size grows with the number of replicas, this leads a
bit-message complexity of $\widetilde{O}(\left| \ids \right|)$\footnote{$\widetilde{O}$ is a variant of big $O$ ignoring logarithmic factors in the size of integers and ids.}.  Whereas, in
the $\dsrdt$ case, only recently updated map entries $\alpha$ are shipped
yielding a bit-complexity $\widetilde{O}(\alpha)$, where $\alpha \ll
\left|\ids\right|$.

\textbf{OR-set.} Shipping in classical OR-set CRDTs delivers the entire state
which yields a bit-message complexity of $O(S)$, where S is the
state-size.  In \dsrdt, only deltas are shipped, which renders a bit-message
complexity $O(s)$ where $s$ represents the size of the recent
updates occurred since the last shipping.  Clearly, $s \ll S$ since the updates
that occur on a state in a period of time are often much less than the total
number of items. 

\textbf{MVR.} In classical MVR, the worst case state is composed of $\left|
\ids \right|$ concurrently written values, each associated with a $\left| \ids
\right|$ sized version vector.  This makes the bit-message complexity
$\widetilde{O}(\left| \ids \right|^2 )$.  In the novel delta design in
Figure~\ref{fig:mvreg}, no version vector is used, whereas the number of
possible values remain the same (summing up the values set $s$ and meta-data in
$c$), this reduces the bit-message complexity to $\widetilde{O}(\left| \ids \right|)$ as
well as the worst case state complexity.

\section{Related Work}
\label{sec:related}
\subsection{Eventually convergent data types.}

The design of replicated systems that are always available and eventually
converge can be traced back to historical designs in
\cite{app:rep:optim:1501,db:rep:optim:1454}, among others.  More recently, replicated data types that
always eventually converge, both by reliably broadcasting operations (called
operation-based) or gossiping and merging states (called state-based), have
been formalized as
CRDTs~\cite{alg:rep:sh132,app:1639,rep:syn:sh138,syn:rep:sh143}. These are also
closely related to Bloom$^L$ \cite{conway2012logic} and Cloud
Types~\cite{burckhardt2012cloud}.

\subsection{Message size.}

A key feature of \dsrdt is message size reduction and coalescing, using
small-sized deltas. The general old idea of using differences between things,
called ``deltas'' in many contexts, can lead to many designs, depending on how
exactly a delta is defined.  The state-based deltas introduced for
Computational CRDTs ~\cite{navalho2013incremental} require an extra
delta-specific merge (in addition to the standard join) which does not ensure
idempotence.  In~\cite{deltaICDCS13}, an improved synchronization method for
non-optimized OR-set CRDT~\cite{rep:syn:sh138} is presented, where delta
information is propagated; in that paper deltas are a collection of items
(related to update events between synchronizations), manipulated and merged
through a protocol, as opposed to normal states in the semilattice. No generic
framework is defined (that could encompass other data types) and the protocol
requires several communication steps to compute the information to exchange.
Operation-based CRDTs~\cite{rep:syn:sh138,syn:rep:sh143,BAS2014} also support
small message sizes, and in particular, \emph{pure} flavors~\cite{BAS2014} that
restrict messages to the operation name, and possible arguments.  Though pure
operation-based CRDTs allow for compact states and are very fast at the source
(since operations are broadcast without consulting the local state), the model
requires more systems guarantees than \dsrdt do, e.g., exactly-once reliable
delivery and membership information, and impose more complex integration of new
replicas.  The work in \cite{export:211340} shows a different trade-off among
state deltas and pure operations, by tagging operations and creating a globally
stable log of operations while allowing local transient logs to preserve
availability.  While having other advantages, the creation of this global log
requires more coordination than our gossip approach for causally consistent
delta dissemination, and can stall dissemination.

\subsection{Encoding causal histories.}

State-based CRDT are always designed to be causally
consistent~\cite{app:1639,syn:rep:sh143}. Optimized implementations of sets,
maps, and multi-value registers can build on this assumption to keep the
meta-data small~\cite{DBLP:conf/popl/BurckhardtGYZ14}.  In \dsrdt, however,
deltas and delta-groups are normally not causally consistent, and thus the
design of \emph{join}, the meta-data state, as well as the anti-entropy
algorithm used must ensure this. Without causal consistency, the causal context
in \dsrdt can not always be summarized with version vectors, and consequently,
techniques that allow for gaps are often used.  A well known mechanism that
allows for encoding of gaps is found in Concise Version
Vectors~\cite{malkhi2007concise}. Interval Version Vectors~\cite{MGS14}, later
on, introduced an encoding that optimizes sequences and allows gaps, while
preserving efficiency when gaps are absent.

\section{Conclusion}
\label{sec:con}

We introduced the new concept of \dsrdt{}s and devised \emph{delta-mutators} over
state-based datatypes which can detach the changes
that an operation induces on the state. This brings a significant
performance gain as it allows only shipping small states, i.e., \emph{deltas},
instead of the entire state. The significant property in \dsrdt is that it
preserves the crucial properties (idempotence, associativity and
commutativity) of standard state-based CRDT.
 In addition, we have shown how \dsrdt can achieve 

causal consistency; and we presented an anti-entropy algorithm that allows replacing classical
state-based CRDTs by more efficient ones, while preserving their properties.
As an application for our approach, we designed two novel \dsrdt
specifications for two well-known datatypes: an optimized
observed-remove set~\cite{rep:opt:sh151} and an optimized multi-value register~\cite{app:rep:optim:1606}.

Our approach is more relaxed than classical state-based CRDTs, and thus, can
replace them without losing their power since \dsrdt allows shipping delta-states
as well as the entire state. 
Another interesting observation is that \dsrdt can mimic the behavior of operation-based CRDTs, by
shipping individual deltas on the fly but with weaker guarantees from
the dissemination layer.

\bibliographystyle{splncs}
\bibliography{predef,ref.bib,bib,shapiro-bib,local}

\newpage
\appendix

\section{Proof of Proposition~\ref{prop:corr}}

\begin{proof}
By simulation, establishing a correspondence between an execution $E^\delta$,
and execution $E$ of a standard CRDT of which $(S,M^\delta,Q)$ is a decomposition,
as follows:
1) the state $(X_i, D_i, \ldots)$ of each node in $E^\delta$ containing
CRDT state $X_i$, information about delta-intervals $D_i$ and possibly other
information, corresponds to only $X_i$ component (in the same join-semilattice);
2) for each action which is a delta-mutation $m^\delta$ in $E^\delta$, $E$ executes
he corresponding mutation $m$, satisfying $m(X) = X \join m^\delta(X)$;
3) whenever $E^\delta$ contains a send action of a delta-interval $\dint iab$,
execution $E$ contains a send action containing the full state $X_i^b$;
4) whenever $E^\delta$ performs a join into some $X_i$ of a delta-interval
$\dint jab$, execution $E$ delivers and joins the corresponding message
containing the full CRDT state $X_j^b$.
By induction on the length of the trace, assume that for each replica $i$,
each node state $X_i$ in $E$ is equal to the corresponding component in the
node state in $E^\delta$, up to the last action in the global trace.
A send action does not change replica state, preserving the correspondence.
Replica states only change either by performing data-type update operations or
upon message delivery by merging deltas/states respectively. If the next
action is an update operation, the correspondence is preserved
due to the delta-state decomposition property $m(X) = X \join m^\delta(X)$.
If the next action is a message delivery at replica $i$, with a merging
of delta-interval/state from other replica $j$, because algorithm $A^\delta$
satisfies the causal merging-condition, it only joins into state $X_i^k$ a
delta-interval $\dint jab$ if $X_i^k \pgeq X_j^a$. In this case, the outcome
will be:
\begin{eqnarray*}
X_i^{k+1}
     &=& X_i^k \join \dint jab \\
     &=& X_i^k \join \bigjoin \{ d_j^l | a \leq l < b \}\\
     &=& X_i^k \join X_j^a \join \bigjoin \{ d_j^l | a \leq l < b \}\\
     &=& X_i^k \join X_j^a \join d_j^a \join d_j^{a+1} \join \ldots \join d_j^{b-1} \\
     &=& X_i^k \join X_j^{a+1} \join d_j^{a+1} \join \ldots \join d_j^{b-1} \\
     &=& \ldots \\
     &=& X_i^k \join X_j^{b-1} \join d_j^{b-1} \\
     &=& X_i^k \join X_j^b \\
\end{eqnarray*}
The resulting state $X_i^{k+1}$ in $E^\delta$ will be, therefore, the same as
the corresponding one in $E$ where the full CRDT state from $j$ has been joined,
preserving the correspondence between $E^\delta$ and $E$.
\end{proof}

\section{Proof of Proposition~\ref{prop:causal-algo}}

\begin{proof}
From Proposition~\ref{prop:causal-consistency}, it is enough to prove that
the algorithm satisfies the causal delta-merging condition.
The algorithm explicitly keeps deltas $d_i^k$ tagged with
increasing sequence numbers (even after a crash), according with the
definition; node $j$ only sends to $i$ a delta-interval $\dint jab$
if $i$ has acked $a$; this ack is sent only if $i$ has already joined some
delta-interval (possibly a full state) $\dint jka$. Either $k = 0$ or, by the
same reasoning, this $\dint jka$ could only have been joined at $i$ if some
other interval $\dint jlk$ had already been joined into $i$. This reasoning
can be recursed until a delta-interval starting from zero is reached. Therefore,
$X_i \pgeq \bigjoin \{ d_j^k | 0 \leq k < a \} = \dint j0a = X_j^a$.
\end{proof}

\end{document}